\documentclass[letterpaper]{dae-handout}
\graphicspath{{./}{./FIGURES/}{./figures/}{../Figures/}{../../Figures/}{./images/}{./graphics/}}

\newif\ifarxiv
\arxivtrue

\newcommand{\trversion}{v1.0}
\newcommand{\trfilename}{oae-cap-tufte.tex}
\newcommand{\trdate}{02026-FEB-24}
\newcommand{\trshorttitle}{Circumventing CAP with OAE}
\newcommand{\trauthor}{Paul Borrill}
\newcommand{\traffiliation}{D\AE D\AE LUS}

\usepackage[T1]{fontenc}
\usepackage[utf8]{inputenc}
\usepackage[bitstream-charter]{mathdesign}
\DeclareSymbolFont{operators}   {OT1}{cmr} {m}{n}
\DeclareSymbolFont{letters}     {OML}{cmm} {m}{it}
\DeclareSymbolFont{symbols}     {OMS}{cmsy}{m}{n}
\DeclareSymbolFont{largesymbols}{OMX}{cmex}{m}{n}
\SetSymbolFont{operators}{bold} {OT1}{cmr} {bx}{n}
\SetSymbolFont{letters}  {bold} {OML}{cmm} {b}{it}
\SetSymbolFont{symbols}  {bold} {OMS}{cmsy}{b}{n}
\SetMathAlphabet{\mathit} {normal}{OT1}{cmr}{m}{it}
\SetMathAlphabet{\mathbf} {normal}{OT1}{cmr}{bx}{n}
\SetMathAlphabet{\mathsf} {normal}{OT1}{cmss}{m}{n}
\SetMathAlphabet{\mathtt} {normal}{OT1}{cmtt}{m}{n}
\usepackage{amsmath}
\usepackage{amsthm}
\usepackage{booktabs}
\usepackage{enumitem}
\usepackage{fancyhdr}
\usepackage{titletoc}
\usepackage{etoc}
\usepackage{lastpage}
\usepackage{xcolor}
\usepackage{tikz}
\usepackage{eso-pic}
\usepackage{url}
\usepackage{morefloats}   
\usepackage{placeins}     
\usetikzlibrary{positioning,shapes.geometric}

\setcitestyle{numbers,square}

\newtheorem{theorem}{Theorem}[section]
\newtheorem{lemma}[theorem]{Lemma}
\newtheorem{proposition}[theorem]{Proposition}
\newtheorem{corollary}[theorem]{Corollary}
\newtheorem{definition}[theorem]{Definition}

\definecolor{linkblue}{RGB}{30,80,180}
\definecolor{linkgreen}{RGB}{30,120,60}
\definecolor{linkred}{RGB}{140,30,30}

\hypersetup{
  colorlinks=true,
  urlcolor=linkblue,
  linkcolor=linkgreen,
  citecolor=linkgreen,
}

\makeatletter
\ifarxiv
  \newcommand{\sidecite}[3][]{\citep{#2}}
\else
  \newcommand{\sidecite}[3][]{%
    \def\sc@temp{#1}%
    \citep{#2}%
    \ifx\sc@temp\empty
      \marginnote{\scriptsize\citep{#2}\enspace #3}%
    \else
      \marginnote[#1]{\scriptsize\citep{#2}\enspace #3}%
    \fi
  }
\fi
\makeatother

\tikzset{
  badge/.style={
    circle,
    draw=#1,
    fill=#1!10,
    line width=0.5pt,
    minimum size=0.45in,
    font=\tiny\sffamily,
    align=center,
    text=#1!80!black
  }
}

\newcommand{\placebadges}{%
  \ifarxiv\else
  \AddToShipoutPictureBG*{%
    \AtPageUpperLeft{%
      \raisebox{-0.5in}{\hspace{\dimexpr\paperwidth-0.7in\relax}%
        \begin{tikzpicture}[overlay]
          \node[inner sep=0pt] (logo)
            {\includegraphics[height=0.855in]{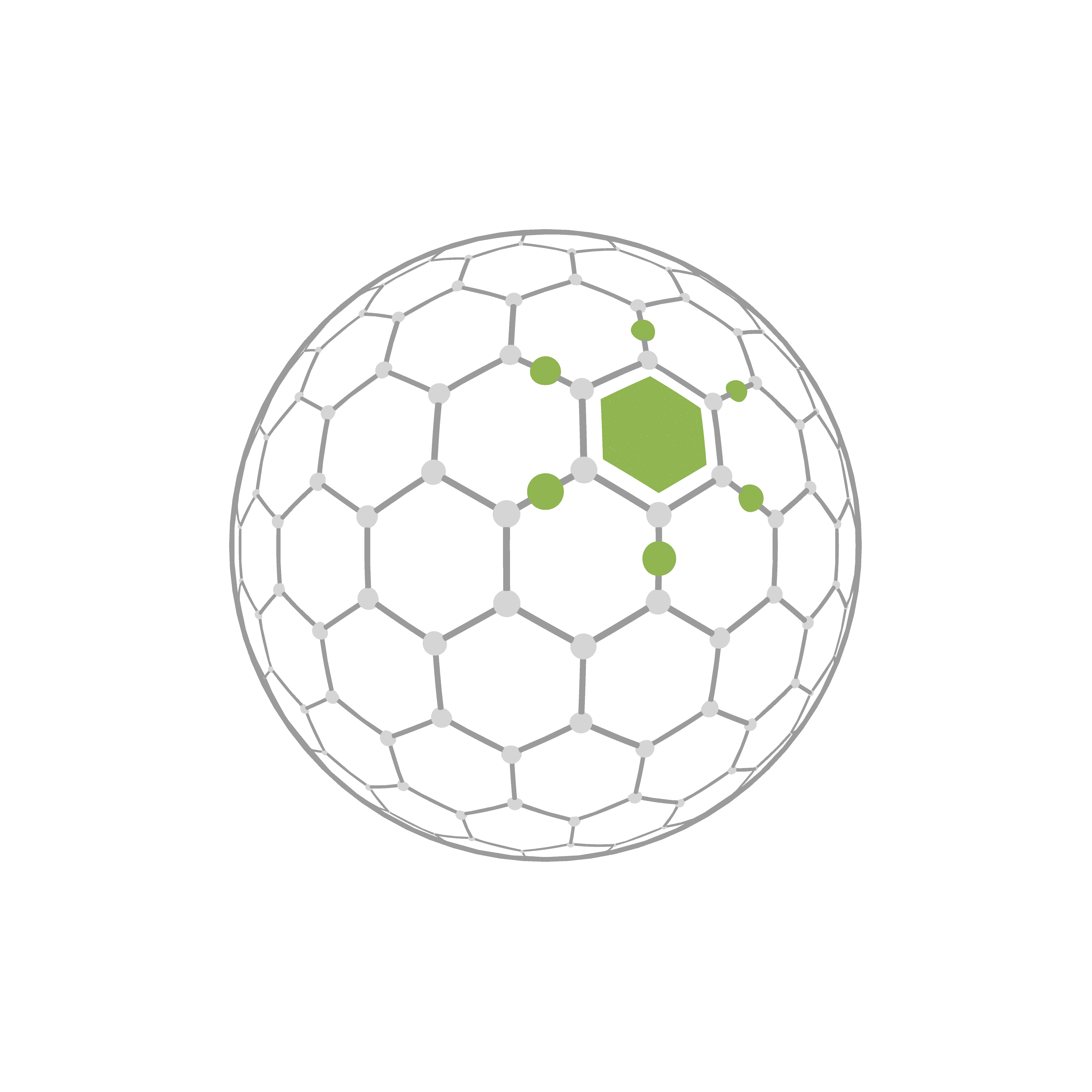}};
          \node[badge=green!60!black, left=0.08in of logo]  (b1) {Artifacts\\Available};
          \node[badge=purple!70!black, left=0.08in of b1]   (b2) {Expert\\Verified};
          \node[badge=green!50!black,  left=0.08in of b2]   (b3) {AI\\Assisted};
          \node[badge=blue!70!black,   left=0.08in of b3]   (b4) {Human\\Conceived};
        \end{tikzpicture}%
      }%
    }%
  }%
  \fi
}

\newcommand{\maketrcover}{%
  \ifarxiv\else
  \thispagestyle{empty}
  \begin{fullwidth}
  \vspace*{2in}
  \begin{center}
    {\Large\sffamily\bfseries D\AE D\AE LUS Technical Report}\\[1.5em]
    {\LARGE\sffamily\bfseries Circumventing the CAP Theorem\\[0.3em]
     with Open Atomic Ethernet}\\[2em]
    {\large \trauthor\,,\;\traffiliation}\\[1em]
    {\normalsize \trversion\quad---\quad\trdate}
  \end{center}

  \vspace{2em}
  \noindent\rule{\linewidth}{0.4pt}

  \vspace{1em}
  \footnotesize
  \begin{description}[leftmargin=1.2in, style=sameline, font=\normalfont\scshape]
    \item[Status:]       v1.0 --- arXiv submission baseline
    \item[Filename:]     \texttt{\trfilename}
    \item[Keywords:]     CAP theorem, partition tolerance, Open Atomic Ethernet, OAE,
                         bisynchrony, Shannon Slots, octavalent mesh, CAL theorem
    \item[Related:]      Circumventing FLP (arXiv, 2026),
                         OAE Specification (OCP),
                         Leibniz Bridge Synthesis (DAE internal)
    \item[License:]      \textcopyright\ 2026 \trauthor, \traffiliation.
                         All rights reserved.
  \end{description}

  \vspace{1.5em}
  \noindent\rule{\linewidth}{0.4pt}

  \vspace{2em}
  \begin{center}
    \normalsize\itshape
    This cover page may be discarded when printing.\\
    The paper begins on the following page.
  \end{center}

  \end{fullwidth}
  \clearpage
  \fi
}

\fancypagestyle{plain}{%
  \fancyhf{}%
  \fancyfoot[L]{\small\textsc{\trshorttitle}}%
  \fancyfoot[C]{\small\thepage\ of \pageref{LastPage}}%
  \ifarxiv\else
  \fancyfoot[R]{\scriptsize\texttt{\trfilename}}%
  \fi
  \ifarxiv\else
  \fancyfoot[L]{\raisebox{-0.8in}{\tiny\texttt{\trversion\ -- \trdate}}}%
  \fancyfoot[R]{\raisebox{-0.8in}{\tiny\texttt{\trfilename}}}%
  \fi
}
\title{Circumventing the CAP Theorem\\with Open Atomic Ethernet}
\author{Paul Borrill, D\AE D\AE LUS}
\date{02026-FEB-24}

\begin{document}
\maketrcover
\setcounter{page}{1}
\maketitle
\placebadges
\thispagestyle{plain}

\daemargintoc

\begin{abstract}
The CAP theorem is routinely treated as a systems law: under network partition, a replicated service must sacrifice either consistency or availability. The theorem is correct within its standard asynchronous network model, but operational practice depends on where partition-like phenomena become observable and on how lower layers discard or preserve semantic information about message fate. This paper argues that Open Atomic Ethernet (OAE) shifts the engineering regime in which CAP tradeoffs become application-visible by (i)~replacing fire-and-forget link semantics with bounded-time bilateral reconciliation of endpoint state---the property we call \emph{bisynchrony}---and (ii)~avoiding Clos funnel points via an octavalent mesh in which each node can act as the root of a locally repaired spanning tree. The result is not the elimination of hard graph cuts, but a drastic reduction in the frequency and duration of application-visible ``soft partitions'' by detecting and healing dominant fabric faults within hundreds of nanoseconds. We connect this view to Brewer's original CAP framing, the formalization by Gilbert and Lynch, the CAL theorem of Lee et~al., which replaces binary partition tolerance with a quantitative measure of apparent latency, and Abadi's PACELC extension.
\end{abstract}

\FloatBarrier
\section[Introduction]{Introduction}

Brewer's CAP theorem frames a central tension in distributed systems design: when networks partition, systems must trade off consistency and availability\sidecite[2in]{Brewer00}{Brewer, \emph{PODC} Keynote, 2000}. Gilbert and Lynch provided a widely cited formalization in an asynchronous network model, proving that for a particular notion of atomic consistency and availability, partitions force a choice\sidecite{GilbertLynch02}{Gilbert \& Lynch, \emph{SIGACT News}, 2002}. Over time, CAP has expanded from a scoped impossibility result into a cultural slogan that shapes architecture: the network is unreliable; timeouts are inevitable; retries are normal; and weakened consistency is the price of scale.


This paper makes a narrower claim. We do not argue CAP is false, nor that physical partitions can be prevented. Rather, we argue that in contemporary systems the practical bite of CAP depends critically on two design decisions that are commonly treated as immutable:

\begin{enumerate}[leftmargin=*, itemsep=3pt]
\item \textbf{Where semantic ambiguity is created.} Conventional link layers discard information about message fate and force higher layers to infer failure by timeout.
\item \textbf{When partitions become observable.} A partition is operationally declared when latency exceeds a detection threshold, not when physics first deviates.
\end{enumerate}

Open Atomic Ethernet (OAE) changes both decisions. It moves decisive coordination down to the link and fabric substrate, making most fault detection and repair local, bounded-time, and below the observation threshold of higher-layer protocols.

\marginnote{\footnotesize This paper is a companion to ``Circumventing the FLP Impossibility Result with Open Atomic Ethernet''~\citep{Borrill18}, which addresses the Fischer--Lynch--Paterson result using the same intellectual framework.}

\FloatBarrier
\section[What CAP Actually Says]{What CAP Actually Says (and What It Does Not Say)}

CAP is often summarized as ``pick two of consistency, availability, and partition tolerance.'' Brewer's keynote is more careful: the tension arises when partitions occur, and the tradeoff is fundamentally about \emph{behavior under partition}\sidecite{Brewer00}{Brewer, \emph{PODC} Keynote, 2000}. Gilbert and Lynch formalize the claim under an asynchronous network model in which messages may be delayed arbitrarily or lost, and they define availability as the requirement that every request to a non-failed node eventually returns a response\sidecite{GilbertLynch02}{Gilbert \& Lynch, \emph{SIGACT News}, 2002}.

Two clarifications are essential:

\begin{itemize}[leftmargin=*, itemsep=3pt]
\item \textbf{CAP does not claim partitions are preventable or avoidable.} Hard graph cuts remain possible---cable cuts, power isolation, physical separation.
\item \textbf{CAP does not require that ``partition'' be a binary event.} Operationally, systems experience a spectrum: transient link faults, microbursts, queue collapse, control-plane reconvergence, and tail latency excursions. Many of these present as partitions to software because the lower layers provide ambiguous delivery states and slow failure detection.
\end{itemize}

\marginnote{\footnotesize Brewer himself revisited these nuances in ``CAP Twelve Years Later''~\citep{Brewer12}, acknowledging that the ``pick two'' framing had become misleading and that partitions are rare events requiring specific strategies.}

The most common category mistake is to treat CAP as a claim about nature---``the network is inherently partitionable''---rather than as a theorem about a modeling choice: asynchronous communication with ambiguous silence and unilateral message ownership.

\FloatBarrier
\section[The Formal Model]{The Gilbert--Lynch Formal Model}

Understanding what OAE circumvents requires understanding precisely what Gilbert and Lynch prove. Their formalization uses an asynchronous network model with the following properties:

\begin{enumerate}[leftmargin=*, itemsep=3pt]
\item \textbf{Asynchronous communication:} There is no upper bound on message delivery time. A message may take any finite amount of time to arrive, or may be lost entirely.
\item \textbf{Unilateral message passing:} Communication is by one-way send. The sender transmits and hopes; the receiver either gets the message or does not. There is no atomic bilateral exchange.
\item \textbf{Linearizable consistency:} The system provides a read-write register with linearizability---every operation appears to take effect atomically at some point between invocation and response.
\item \textbf{Availability as liveness:} Every request received by a non-failing node must eventually receive a response.
\item \textbf{Partition as message loss:} A partition is modeled as the complete loss of all messages between two disjoint sets of nodes for an indefinite period.
\end{enumerate}

\marginnote[-1cm]{\footnotesize Kleppmann~\citep{Kleppmann15} identified several issues with this formalization, including its narrow scope (single read-write register), definitional ambiguity, and binary framing that oversimplifies graduated tradeoffs.}

The proof is a simple two-node argument. During a partition, node~1 receives a write and node~2 receives a read. For availability, both must respond. For consistency (linearizability), the read must return the write's value. But the partition prevents communication. Contradiction.

The proof's power comes from its generality within the asynchronous model. Its limitation comes from the model itself: it assumes that partitions are total, indefinite, and detected only by the absence of messages.

\FloatBarrier
\section[From CAP to CAL]{From CAP to CAL: Quantifying Partition as Apparent Latency}

Lee et~al.\ generalize CAP by replacing the binary notion of partition tolerance with a numerical measure of \textbf{apparent latency}~$L$ and deriving an algebraic relation between inconsistency, unavailability, and~$L$ (the CAL theorem)\sidecite{LeeCAL21}{Lee et~al., \emph{arXiv:2109.07771}, 2021}. In the CAL view, CAP is recovered as a limiting case: if latency becomes unbounded (the network is effectively partitioned), then one of inconsistency or unavailability must also become unbounded.

\marginnote{\footnotesize The CAL theorem replaces binary properties with continuous measures: inconsistency as a time interval (maximum data staleness), unavailability as a time interval (maximum response delay), and apparent latency including network delay, execution overhead, and clock synchronization error.}

This matters for OAE because it makes explicit what engineering practice already does implicitly:

\begin{itemize}[leftmargin=*, itemsep=3pt]
\item A ``partition'' is operationally declared when latency exceeds a timeout threshold.
\item Timeouts are not physical constants; they are guesses made necessary by semantic ambiguity.
\end{itemize}

Thus, changing where ambiguity is created and reducing the tail of apparent latency at the substrate directly changes the regime in which CAP-like tradeoffs become visible. In CAL terms, the question is not whether partitions are possible, but whether apparent latency can be kept bounded for the dominant class of faults.

\FloatBarrier
\section[The Category Mistake]{The Category Mistake in Conventional Networking}

The deeper question is: why does conventional networking operate in the asynchronous model at all? The answer is historical, not physical.

Ethernet was designed as a shared-medium broadcast protocol where collisions were expected and frame loss was normal. When switched Ethernet replaced shared media, the point-to-point links between switch and host became deterministic physical channels---but the protocol semantics inherited from the broadcast era were never updated. Layer~2 remained fire-and-forget: frames are transmitted into the void, with no bilateral acknowledgment, no transactional semantics, and no guaranteed delivery.

\marginnote{\footnotesize The ``fire-and-forget'' semantics of Layer~2 Ethernet are a historical artifact of CSMA/CD, not a physical necessity of point-to-point links. See the companion FLP paper~\citep{Borrill18} for the full development of this argument.}

This creates an \emph{artificial} asynchrony. The physical link between two directly connected endpoints has bounded propagation delay (determined by the speed of light in the medium and the cable length), bounded serialization delay (determined by the line rate and frame size), and deterministic error detection (via CRC). The physics provides bounded propagation and serialization delay. It is only the protocol that imposes asynchrony by choosing not to exploit this structure.

The consequence is that every layer above Layer~2 inherits the asynchronous model by default. TCP must deal with unbounded delays, lost segments, and reordering---not because the physics demands it, but because Layer~2 discards the acknowledgment information that would prevent these pathologies. Higher layers then spend enormous complexity reconstructing, approximately and probabilistically, what the physical layer knew all along.

This is the category mistake: treating a semantic property (transactional atomicity) as something that can be constructed from syntactic mechanisms (checksums, sequence numbers, retransmission timers) layered over a substrate that has already destroyed the relevant information.

\FloatBarrier
\section[OAE's Model]{OAE: Bisynchronous Link Semantics and Slot Reconciliation}

OAE begins from a different substrate assumption: a directly connected point-to-point link provides bounded propagation and serialization delay. Conventional Ethernet semantics do not exploit this; they retain a broadcast-era fire-and-forget ontology in which a frame may be ``in flight'' with ambiguous ownership.

OAE instead treats each link as a \textbf{slot reconciliation protocol}: a paired register at each endpoint is reconciled so that each round has a bounded-time, bilateral, and unambiguous outcome. The protocol is designed so that silence within a reconciliation interval is informative rather than ambiguous.

\begin{figure*}[ht]
\centering
\includegraphics[width=\linewidth]{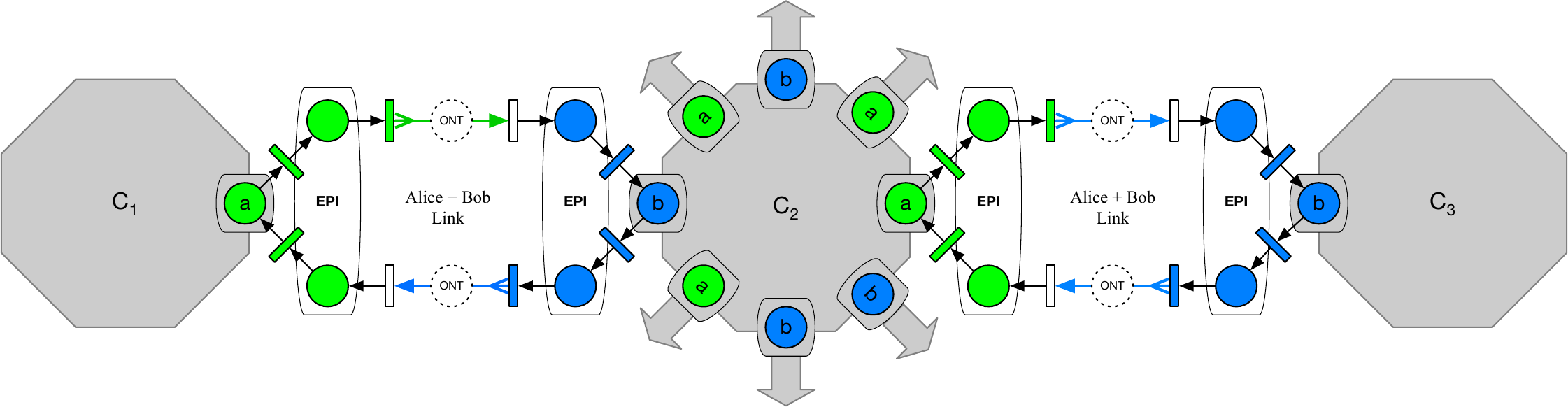}
\caption{Three OAE cells ($C_1$, $C_2$, $C_3$) connected by bilateral links. Each link endpoint contains an EPI register pair: Alice (green) and Bob (blue). Cell~$C_2$ is an octavalent cell with all eight ports visible. The link between two cells is not a one-way channel but a paired register reconciliation: both endpoints participate symmetrically in each slot, and the outcome---$(M,M)$ or $(\emptyset,\emptyset)$---is known to both parties at round boundary.}
\label{fig:alice-bob-charlie}
\end{figure*}

\subsection[Shannon Slots]{Shannon Slots: Register Reconciliation, Not Time Slots}

A critical distinction must be drawn at the outset. Slotted Aloha and Ethernet quantize \emph{time on the medium}. OAE quantizes \emph{ownership of state between endpoints}. The former solves contention; the latter solves ambiguity. These are not incremental differences---they are different problem definitions.

\marginnote[1.5cm]{\footnotesize The term ``slot'' in networking history refers to a time window. OAE's Shannon Slots are a fundamentally different object: a paired register reconciliation that atomically resolves ownership of state.}

A Shannon Slot is not a time window on a shared medium---it is a \emph{paired register} at each end of a point-to-point link. The wire is not the scarce object; \emph{ambiguity} is the scarce object. OAE is a slot reconciliation protocol: it atomically reconciles the contents of register pairs across the link.

The time bound~$\Delta$---the minimum interval for one complete bilateral reconciliation, determined by line rate and cable length---is a physical consequence of register reconciliation, not the defining abstraction. What matters is not the duration but the outcome: after each reconciliation, both endpoints know the state of both registers. This gives three properties:

\begin{enumerate}[leftmargin=*, itemsep=3pt]
\item \textbf{Bilateral resolution:} Both registers reach one of two states---$(M, M)$ or $(\emptyset, \emptyset)$---with no ambiguous intermediate ownership.
\item \textbf{Bounded completion:} Reconciliation completes within~$\Delta$, which is known and fixed.%
\marginnote{\footnotesize $\Delta$ is not a time window---it is the duration of a bisynchronous register exchange, bounded by the two-way speed of light in the medium and the serialization delay at line rate.}
\item \textbf{Informative silence:} If no register update arrives within~$\Delta$, none was sent. Silence is a definitive outcome, not an ambiguity.
\end{enumerate}

This is precisely the model under which Halpern and Moses\sidecite{HalpernMoses90}{Halpern \& Moses, \emph{JACM}, 1990} proved that common knowledge is achievable: a synchronous system with guaranteed delivery within time~$\Delta$, where common knowledge---the infinite regress of ``I know that you know that I know\ldots''---is established at round boundaries.

\begin{figure*}[ht]
\centering
\includegraphics[width=\linewidth]{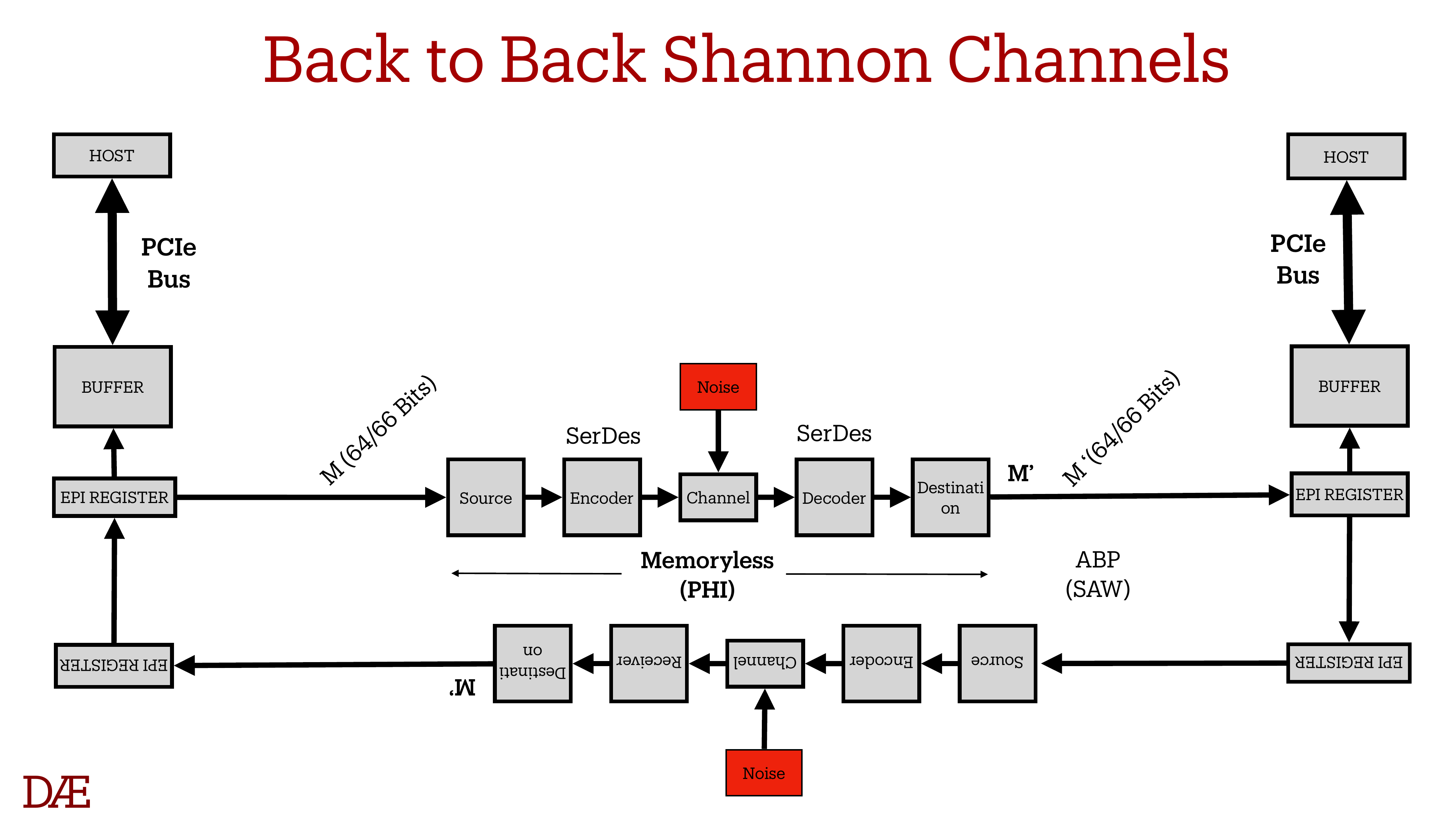}
\caption{Back-to-back Shannon channels forming a bilateral OAE link. Each direction comprises a classical Source--Encoder--Channel--Decoder--Destination chain with SerDes and 64/66-bit encoding. The memoryless (PHI) region spans the physical channel; the Alternating Bit Protocol (ABP / Stop-and-Wait) region spans the EPI register reconciliation at each endpoint. The two channels share no state---each is an independent noisy channel in the Shannon sense---but the bilateral register swap at the endpoints converts the pair into a single atomic reconciliation with common-knowledge outcome.}
\label{fig:b2b-shannon}
\end{figure*}

\subsection[Bisynchrony]{Bisynchrony: Beyond the Synchronous/Asynchronous Dichotomy}

The term \textbf{bisynchronous} denotes a strictly stronger property than bounded-time delivery: \textbf{bounded-time bilateral resolution} in which both parties reach common knowledge of outcome at each round boundary. The prefix \textbf{bi-} denotes bilateral epistemic symmetry, not dual-clock alignment.

\marginnote[3.5cm]{\footnotesize The three axes that ``synchronous/asynchronous'' incorrectly collapses: (1)~timing model, (2)~API blocking semantics, (3)~epistemic commit symmetry. OAE's innovation is on axis~(3).}

A synchronous protocol guarantees bounded delivery but permits \emph{asymmetric knowledge states}: the sender knows it sent, but not whether the receiver committed. Many application-visible partitions are not hard graph cuts; they are inference failures caused by ambiguous silence and unbounded apparent latency induced by timeout-based detection. Bisynchrony eliminates the asymmetric knowledge gap that drives these failures.

\begin{description}[leftmargin=*, itemsep=3pt]
\item[Asynchronous:] Unbounded delay, ambiguous silence, no knowledge of peer state.
\item[Synchronous:] Bounded delay, silence is informative, but knowledge may be asymmetric (sender vs.\ receiver).
\item[Bisynchronous:] Bounded delay, silence is informative, and both parties reach common knowledge of outcome at round boundary.
\end{description}

In the failure-detector framework of Chandra and Toueg\sidecite{CT96}{Chandra \& Toueg, \emph{JACM}, 1996}, conventional systems rely on unreliable failure detectors ($\Diamond\mathcal{P}$ or $\Diamond\mathcal{S}$) whose accuracy depends on timeout tuning, which in turn depends on the tail of the latency distribution---precisely the quantity that soft partitions corrupt. OAE's bisynchronous link semantics replace timeout-based failure inference with deterministic slot outcomes: a link is either reconciled or failed at each slot boundary, with no intermediate ambiguity. This is operationally equivalent to a \emph{perfect} failure detector for link faults---the strongest class in the Chandra--Toueg hierarchy---achieved not by assumption but by substrate design.

\subsection[Swap Not Send]{Swap, Not Send: The Universal Primitive}

OAE does not \emph{send} messages. It performs \emph{bilateral register swaps}: atomic exchanges between NIC register pairs at each end of the link. In a conventional ``send'' operation, the sender unilaterally transmits a frame. The frame enters an ambiguous intermediate state---``in the wire''---where ownership is undefined. In an OAE register swap, both endpoints participate symmetrically. The reconciliation has only two permitted outcomes:

\begin{itemize}[leftmargin=*, itemsep=3pt]
\item $(M, M)$: Both registers hold the message. Transaction committed.
\item $(\emptyset, \emptyset)$: Neither register holds the message. Transaction aborted.
\end{itemize}

No intermediate state is permitted. This connects directly to Herlihy's consensus hierarchy\sidecite{Herlihy91}{Herlihy, \emph{TOPLAS}, 1991}: memory-to-memory swap has infinite consensus number---it is a universal primitive. OAE builds its coordination on a universal primitive rather than the consensus-number-1 read/write operations that conventional networking employs.

\subsection[Soft vs Hard Partitions]{Soft Partitions vs.\ Hard Partitions}

We use the following operational distinction:

\begin{itemize}[leftmargin=*, itemsep=3pt]
\item \textbf{Hard partition:} a graph cut that prevents communication between components for an extended duration---physical separation, complete path loss.
\item \textbf{Soft partition:} a transient loss of reachability or decisive knowledge caused by local faults, congestion-induced drops, or control-plane reconvergence, which is observed by software as a partition because detection is timeout-based and repair is slow.
\end{itemize}

\marginnote{\footnotesize The ECC memory analogy: bit flips occur but are corrected before any software observes them. The memory appears perfect because correction is below the observation threshold. OAE applies the same principle to link and fabric faults.}

OAE does not claim to eliminate hard partitions. Its claim is that it dramatically reduces the frequency and duration of soft partitions by shifting detection and repair below the observation threshold of higher layers. For a detailed case study of how soft partitions manifest as user-visible failures in a production cloud system, see~\citep{BorrilliCloud26}.

\FloatBarrier
\section[Fabric Consequence]{Fabric Consequence: Octavalent Mesh Instead of Clos Funnel Points}

In contemporary data centers, Clos (fat-tree) fabrics concentrate forwarding and failure semantics into tiered switches. A link or switch fault can trigger control-plane reconvergence and rerouting, often taking microseconds to seconds, and can propagate disruption globally.

\begin{figure}[h!]
\centering
\includegraphics[width=\linewidth]{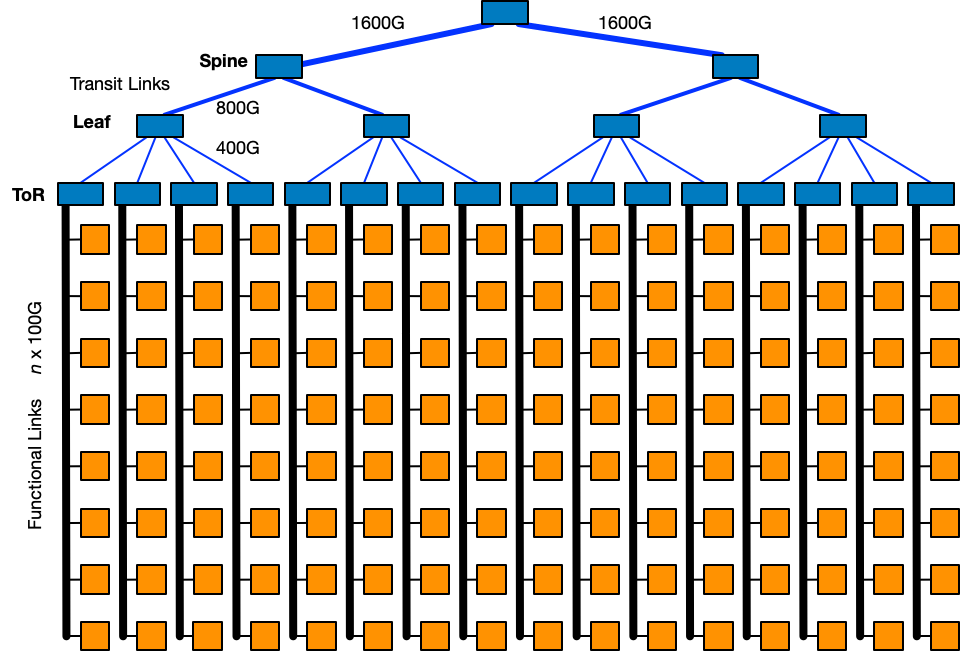}
\caption{Conventional three-tier Clos fabric (logical view). Traffic between compute nodes (orange) traverses ToR, Leaf, and Spine tiers via 100G transit links. A single switch or link failure triggers global control-plane reconvergence---the window during which higher layers observe a partition.}
\label{fig:clos}
\end{figure}
\begin{figure}[h!]
\centering
\includegraphics[width=\linewidth]{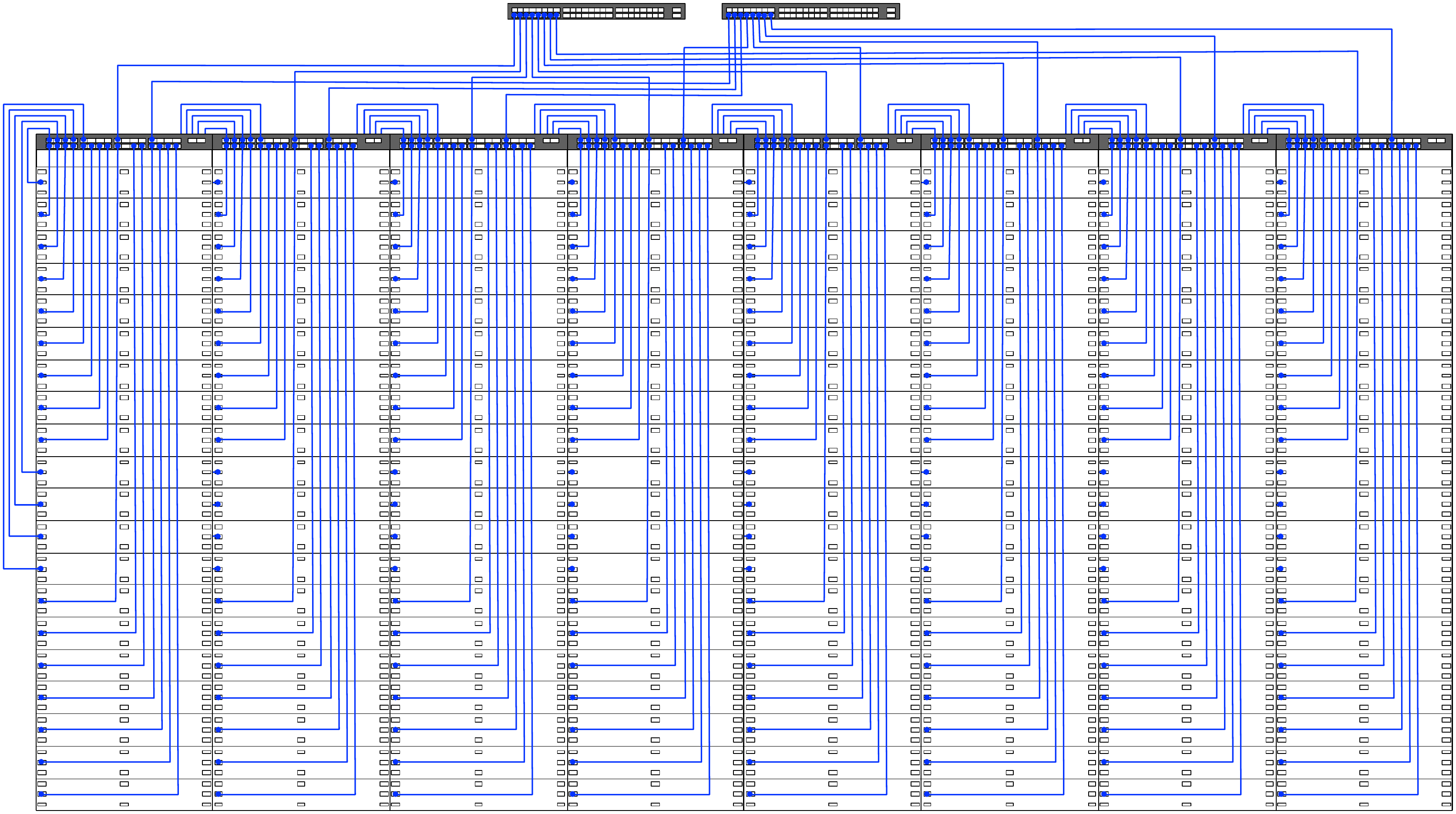}
\caption{The same Clos fabric (physical wiring). Individual server NICs are cabled point-to-point to Top-of-Rack switches via SFP+/QSFP transceivers over copper or fiber. Uplinks (blue cables) fan out to leaf and spine tiers above. Every link---NIC to ToR, ToR to leaf, leaf to spine---is a unilateral fire-and-forget channel: the sender transmits a frame and receives no link-layer acknowledgment of delivery.}
\label{fig:clos-physical}
\end{figure}

The physical reality behind the Clos diagram is shown in Figure~\ref{fig:clos-physical}. Each server NIC is individually cabled to a Top-of-Rack (ToR) switch, typically via a dedicated point-to-point link (SFP+ or QSFP transceiver over copper or fiber). Every one of these links is a fire-and-forget channel: the NIC transmits a frame and receives no link-layer acknowledgment of delivery. The ToR switch then fans traffic upward through leaf and spine tiers over additional point-to-point links. Each hop introduces another instance of the same unilateral semantics---frames dispatched into a forwarding pipeline with no bilateral resolution of outcome. A single cable pull, transceiver failure, or switch reboot anywhere in this tree triggers control-plane reconvergence that propagates through the entire fabric.

\begin{figure}[ht]
\centering
\includegraphics[width=\linewidth]{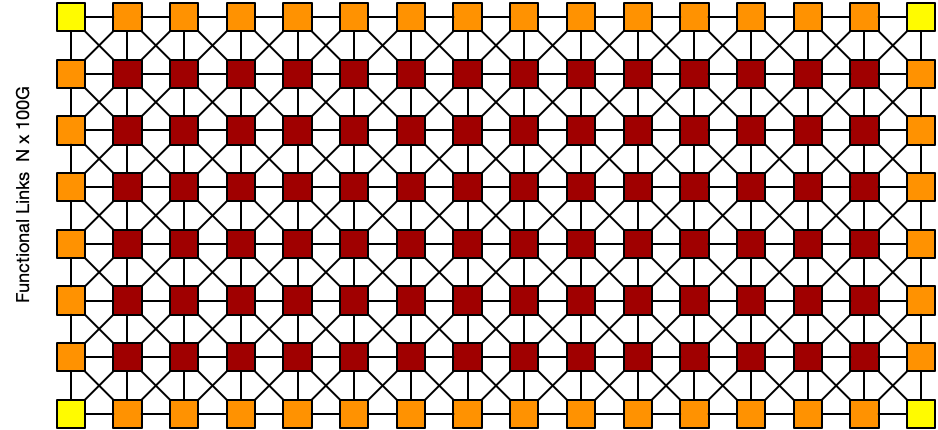}
\caption{OAE Cellular Fabrix: a $10\times 20$ octavalent mesh with valency color-coding. Yellow cells (corners) have valency~3; orange cells (edges) have valency~5; dark red cells (core) have valency~8. Each functional link operates at 100G. Every cell is both compute and forwarding---there are no proprietary switch tiers. A single link failure is healed locally by parent reselection, not by global control-plane reconvergence. Note that this is a scale-independent architecture: the sea of nodes/XPUs can extend in the planar direction (north, east, west, south) via connections to the orange cells, eliminating the need for hierarchical switches.}
\label{fig:cellular-fabrix}
\end{figure}

OAE instead assumes an \textbf{octavalent mesh} in which:

\begin{itemize}[leftmargin=*, itemsep=3pt]
\item Each cell connects to eight neighbors.
\item Every cell forwards packets---no proprietary funnel points.
\item Every node can be the root of its own spanning tree, and local repair chooses alternate parent links toward the root using only local information.
\end{itemize}

\marginnote{\footnotesize In an $n \times n$ octavalent mesh, the number of spanning trees grows exponentially in the number of nodes~$|V|$ (by Kirchhoff's matrix tree theorem). Even a modest mesh has millions of alternate trees.}

When a link fails, the sequence under Clos is: (1)~loss detection via keepalive expiration or control-plane signaling, (2)~propagation of failure information through the routing control plane, (3)~recalculation of forwarding tables, (4)~installation of new forwarding state across affected switches. Even in optimized environments, this introduces a nontrivial window during which packets may be dropped, flows rerouted inconsistently, queues built asymmetrically, and endpoints forced to infer failure through timeout. From the perspective of higher layers, this window is indistinguishable from a partition.

In an OAE octavalent mesh, detection is not inferred from silence at an arbitrary timeout scale. Instead, it is resolved within a bounded number of reconciliation rounds at the link layer. Child cells switch to pre-allocated alternate parent links toward the same root using only local observer-view information. No global control-plane recomputation is required.

The essential distinction is not merely topological but semantic:

\begin{itemize}[leftmargin=*, itemsep=3pt]
\item \textbf{Clos:} failure repair is a control-plane event requiring global state propagation and recomputation.
\item \textbf{OAE mesh:} failure repair is a data-plane event resolved locally via bounded-time bilateral link semantics.
\end{itemize}

Because repair is local and bounded, the duration of disrupted reachability for dominant single-link or single-cell faults is constrained by the link reconciliation bound---hundreds of nanoseconds to a few microseconds---rather than by network-wide reconvergence time. More precisely, let $\Delta$~denote the reconciliation bound per link and $D$~the graph diameter. If healing requires full tree reconstruction, the worst case is $T_{\text{heal}} \le D \cdot \Delta$. For a planar mesh, $D = O(\!\sqrt{|V|})$, so healing time scales sublinearly with system size. But OAE's local parent reselection is stronger: a child cell chooses an alternate parent link using only local information, completing in $T_{\text{heal}} = O(\Delta)$---a single reconciliation round---regardless of fabric size.

\FloatBarrier
\section[Spanning Trees and Local Healing]{Spanning Trees, Per-Node Roots, and Why Meshes Heal Locally}

We model the fabric as an undirected graph $G=(V,E)$ whose vertices are cells (XPUs / SmartNICs) and whose edges are physical links. OAE's fabric claim is that the dominant failure mode in practice is not a full graph cut, but a local link or local cell fault that can be detected and healed below the observation threshold of any Layer-4-and-above protocol. Operationally, OAE builds a spanning tree per node, rooted at that node, and performs healing by local parent re-selection using only local information, avoiding global reconvergence and switch-centric control planes.

\marginnote{\footnotesize This is the ``spanning trees as resilience'' framing: each node can be the root of its own tree, and broken links can be healed around locally to form another spanning tree.}

The purpose of this section is to make the resilience claim numeric: in a mesh, the number of distinct spanning trees grows extremely rapidly with~$|V|$, giving a combinatorial reservoir of alternate routes for local healing. In contrast, a tree topology has exactly one spanning tree (itself), hence no alternate choices without changing the physical graph.

\subsection[Definitions]{Definitions}

A \emph{spanning tree} of a connected undirected graph~$G$ is a subgraph $T\subseteq G$ that is a tree and contains all vertices. Let $\tau(G)$ denote the number of spanning trees of~$G$.

A \emph{rooted spanning tree} is a spanning tree together with a distinguished root vertex $r\in V$. For undirected graphs, the number of rooted spanning trees is $|V|\cdot \tau(G)$ (choose a spanning tree, then choose its root).

\marginnote[1cm]{\footnotesize In an OAE mesh, each node~$r$ maintains a rooted tree oriented toward~$r$ (a directed arborescence), and healing corresponds to local rewiring that restores an arborescence after an edge or vertex fault.}

The key combinatorial point is that if $\tau(G)$ is enormous, then there exist enormous numbers of alternate choices for parent edges while preserving reachability to the root.

\subsection[Trees Have No Alternate Trees]{Baseline: Trees Have No Alternate Trees}

\begin{lemma}[A tree has exactly one spanning tree]
If $G$ is a tree, then $\tau(G)=1$.
\end{lemma}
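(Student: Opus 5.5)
The plan is to show directly that any spanning tree $T$ of a tree $G$ must use every edge of $G$, so $T=G$ and hence $\tau(G)=1$. Existence is trivial, since $G$ is itself a spanning tree of $G$: it is a tree and contains all vertices. The work lies in uniqueness.

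For uniqueness I would use edge counting. Let $n=|V|$. A standard fact is that every tree on $n$ vertices has exactly $n-1$ edges. This can be cited directly, or proved quickly by induction via leaf removal, using that a finite tree with at least two vertices has a leaf. Now let $T\subseteq G$ be any spanning tree. Then $T$ has vertex set $V$ and is a tree, so $|E(T)|=n-1$. Since $G$ is also a tree on $V$, we have $|E(G)|=n-1$. Because $E(T)\subseteq E(G)$ and both sets are finite of the same size, $E(T)=E(G)$, and therefore $T=G$. So $G$ is the only spanning tree, and $\tau(G)=1$.

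As an alternative, or as a cross-check that avoids the counting fact, I would argue via bridges. Suppose some edge $e=uv\in E(G)$ were missing from $T$. Since $T$ is connected and spanning, $T$ contains a $u$--$v$ path $P$ that avoids $e$. Then $P+e$ is a cycle in $G$, which contradicts acyclicity of $G$. Hence every edge of $G$ lies in $T$, and again $T=G$.

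The only step with any content is the fact that a tree on $n$ vertices has $n-1$ edges, which is equivalent to the existence of leaves. I would take it as standard. If a self-contained argument is preferred, I would use the bridge argument instead, which relies only on the definitions of tree and spanning subgraph. The degenerate case $n=1$, the single-vertex tree, is covered by either argument.
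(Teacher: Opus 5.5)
Your proposal is correct and takes essentially the same approach as the paper, which also rests on the fact that any spanning tree has exactly $|V|-1$ edges and on the observation that deleting any edge of a tree disconnects it (the content of your bridge argument). Your version is slightly tighter, since you state explicitly that $E(T)\subseteq E(G)$ together with $|E(T)|=|E(G)|$ forces $T=G$, a step the paper leaves implicit.
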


\begin{proof}
A tree is already connected and acyclic. Any spanning subgraph that remains connected and acyclic must contain exactly $|V|-1$ edges. Removing any edge disconnects the tree, and adding any edge creates a cycle. Hence the only spanning tree subgraph is~$G$ itself.
\end{proof}

This is the combinatorial core of why hierarchical fabrics (in the limiting case of a single logical tree) have globally-coupled healing: there is no alternate spanning tree to switch to; the only way to restore connectivity is global recomputation.

\subsection[Kirchhoff's Matrix--Tree Theorem]{Counting Trees in Meshes via Kirchhoff's Matrix--Tree Theorem}

\begin{theorem}[Kirchhoff / Matrix--Tree]
Let $G$ be a connected graph with Laplacian $L=D-A$, where $A$ is the adjacency matrix and $D$ is the diagonal degree matrix. Delete any one row and the corresponding column of~$L$ to obtain a cofactor matrix~$L^{(i)}$. Then
\[
\tau(G)=\det\!\bigl(L^{(i)}\bigr),
\]
and the value is independent of the choice of~$i$.
\end{theorem}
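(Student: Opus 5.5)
We prove the Matrix--Tree theorem through the Cauchy--Binet formula applied to an oriented incidence matrix. Fix an arbitrary orientation of each edge of $G$, and let $B$ be the $|V|\times|E|$ signed incidence matrix: the column of edge $e=(u\to v)$ has $+1$ in row $u$, $-1$ in row $v$, and zeros elsewhere. A direct entrywise computation gives $BB^{\top}=D-A=L$. The diagonal entry $(BB^{\top})_{ww}$ counts the edges incident to $w$, which is $\deg w$. An off-diagonal entry $(BB^{\top})_{uv}$ receives $(+1)(-1)=-1$ from each edge joining $u$ and $v$. Now delete row $i$ of $B$ to get the $(|V|-1)\times|E|$ matrix $B_i$. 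Then $L^{(i)}=B_iB_i^{\top}$.

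Set $n=|V|$. Cauchy--Binet gives
\[
\det\bigl(L^{(i)}\bigr)=\sum_{S\subseteq E,\ |S|=n-1}\det\bigl(B_i[S]\bigr)^2,
\]
where $B_i[S]$ is the square submatrix on the columns indexed by $S$. Everything then rests on one key lemma: $\det B_i[S]\in\{0,\pm1\}$, and it is nonzero exactly when $S$ is the edge set of a spanning tree. Once the lemma holds, each spanning tree contributes exactly $1$ to the sum and every other $(n-1)$-subset contributes $0$, so $\tau(G)=\det(L^{(i)})$. Independence of $i$ then follows at once, because the combinatorial count $\tau(G)$ does not mention $i$.

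The key lemma is the main step, and we prove it in two cases.

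\textbf{Case 1: $S$ is not a spanning tree.} Since $|S|=n-1$, the graph $(V,S)$ must be disconnected. Take a connected component $C$ not containing $i$. Summing the rows of $B[S]$ indexed by $C$ gives zero, because every edge with one endpoint in $C$ has both endpoints in $C$ and so contributes $+1-1$. All rows of $C$ survive in $B_i[S]$, so its rows are linearly dependent and the determinant is $0$.

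\textbf{Case 2: $S$ is a spanning tree $T$.} We induct on $n$. Pick a leaf $w\neq i$; one exists because a tree with at least two vertices has at least two leaves. Row $w$ of $B_i[T]$ has a single nonzero entry $\pm1$, located in the column of the unique edge at $w$. Expanding the determinant along that row leaves $\pm\det$ of the matrix for $T-w$ with row $i$ deleted. By induction this is $\pm1$. The base case $n=1$ is the empty determinant, equal to $1$.

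We expect the dependency argument in Case 1 to be the subtle point. Each step is elementary, but care is needed to choose the component avoiding the deleted row $i$, so that its summed rows are genuinely present in $B_i[S]$.
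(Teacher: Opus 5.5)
The paper gives no proof of this theorem; it invokes it as Kirchhoff's classical 1847 result, so your argument adds a proof rather than offering an alternative to one. Your proof is correct and complete. It is the standard Cauchy--Binet argument: the reduced signed incidence matrix $B_i$ gives $L^{(i)}=B_iB_i^{\top}$, and $\det B_i[S]\in\{0,\pm1\}$, nonzero exactly when $S$ is a spanning tree. Both cases of your key lemma hold up:
\begin{itemize}
\item In Case~1, choosing a component $C$ of $(V,S)$ that avoids $i$ ensures the zero row-sum actually lives inside $B_i[S]$.
\item In the leaf induction, deleting row $w$ and column $e_w$ leaves exactly the reduced incidence matrix of $T-w$, because no other edge of $T$ touches $w$. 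This should be said explicitly; you use it implicitly.
\end{itemize}

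Your route also supplies two things the paper uses without comment.
\begin{itemize}
\item Connectivity is never needed. For disconnected $G$, every $(n-1)$-subset falls into Case~1, so $\det L^{(i)}=0=\tau(G)$.
\item Independence of $i$ is what justifies the eigenvalue form $\tau(G)=\frac{1}{|V|}\prod_{\ell\ge 2}\lambda_\ell$, which the paper asserts ``equivalently'' in its grid proposition. The sum of the $n$ principal $(n-1)\times(n-1)$ minors of $L$ equals the elementary symmetric function $e_{n-1}(\lambda_1,\dots,\lambda_n)$. Since $\lambda_1=0$, this is $\prod_{\ell\ge 2}\lambda_\ell$. Each of those minors equals $\tau(G)$ by your theorem, so $n\,\tau(G)=\prod_{\ell\ge 2}\lambda_\ell$.
\end{itemize}
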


\marginnote[0.5cm]{\footnotesize Kirchhoff's theorem (1847) makes $\tau(G)$ computable and, more importantly, provides closed forms for regular product graphs such as grids.}

\subsection[Grid Formula]{Exact Formula for an $m\times n$ Grid}

Let $G_{m,n} = P_m \,\square\, P_n$ be the $m\times n$ grid graph (Cartesian product of path graphs). The Laplacian eigenvalues of~$G_{m,n}$ are
\[
\lambda_{j,k} \;=\; 4 - 2\cos\!\left(\frac{\pi j}{m}\right) - 2\cos\!\left(\frac{\pi k}{n}\right),
\]
for $0\le j\le m-1$ and $0\le k\le n-1$.
The $(j,k)=(0,0)$ eigenvalue is~$0$ (as for any Laplacian); all others are positive.

\begin{proposition}[Spanning trees of a grid]
For $m,n\ge 2$,
\[
\tau(G_{m,n})
\;=\;
\frac{1}{mn}\;
\prod_{\substack{0\le j\le m{-}1\\[2pt] 0\le k\le n{-}1\\[2pt] (j,k)\neq (0,0)}}
\biggl(\,
4 - 2\cos\!\left(\frac{\pi j}{m}\right) - 2\cos\!\left(\frac{\pi k}{n}\right)
\!\biggr).
\]
\end{proposition}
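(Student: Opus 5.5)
The plan is to combine the Matrix--Tree Theorem stated above with the spectral form of Kirchhoff's result. For a connected graph on $N$ vertices, $\tau(G)=\frac1N\prod_{i\ge 2}\lambda_i$, where $0=\lambda_1<\lambda_2\le\dots\le\lambda_N$ are the Laplacian eigenvalues. Applying this with $N=mn$ and the eigenvalues $\lambda_{j,k}$ listed before the proposition gives exactly the displayed formula. The proof therefore has two parts: (a) derive the spectral form from the cofactor form, and (b) verify the eigenvalue list for $G_{m,n}$.

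\textbf{Step (a).} Consider the characteristic polynomial $\det(xI-L)=\prod_i(x-\lambda_i)$. Its coefficient of $x$ is $(-1)^{N-1}\sum_i\prod_{l\ne i}\lambda_l$. Since $\lambda_1=0$, this sum reduces to $(-1)^{N-1}\prod_{i\ge2}\lambda_i$. Expanding the same determinant directly, the coefficient of $x$ is $(-1)^{N-1}\sum_i\det(L^{(i)})$, the sum of the principal $(N-1)$-minors. By the Matrix--Tree Theorem each of these $N$ minors equals $\tau(G)$. Hence $N\tau(G)=\prod_{i\ge2}\lambda_i$. Connectivity of $G_{m,n}$ for $m,n\ge2$ is immediate, since one can walk along rows and columns.

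\textbf{Step (b).} Since $G_{m,n}=P_m\square P_n$, the Laplacian decomposes as $L(G_{m,n})=L(P_m)\otimes I_n+I_m\otimes L(P_n)$. The eigenvalues of a Cartesian product are therefore all sums $\mu_j+\nu_k$, where $\mu_j$ and $\nu_k$ are eigenvalues of $L(P_m)$ and $L(P_n)$, and the eigenvectors are the tensor products $u_j\otimes v_k$. It remains to show that $L(P_m)$ has eigenvalues $2-2\cos(\pi j/m)$ for $0\le j\le m-1$. I would exhibit the eigenvectors $u_j(t)=\cos\!\bigl(\pi j(t-\tfrac12)/m\bigr)$ for $t=1,\dots,m$. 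At an interior vertex the relation $2u(t)-u(t-1)-u(t+1)=(2-2\cos(\pi j/m))u(t)$ follows from the sum-to-product identity. At the endpoints it holds because $u_j(0)=u_j(1)$ and $u_j(m+1)=u_j(m)$, by the reflection symmetry of cosine about $t=\tfrac12$ and $t=m+\tfrac12$. These $m$ eigenvalues are distinct, since $\cos$ is injective on $[0,\pi)$, so they account for the entire spectrum. Consequently $\lambda_{j,k}=4-2\cos(\pi j/m)-2\cos(\pi k/n)$ with multiplicity as indexed. The only zero eigenvalue is $\lambda_{0,0}$, because every other term has $\cos<1$ in at least one coordinate. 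Substituting these eigenvalues into Step (a) completes the proof.

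The main obstacle is the boundary behavior in Step (b). Path Laplacians have degree-$1$ endpoints, so the naive sine or exponential ansatz used for cycles fails there. The half-integer-shifted cosine (DCT-II) basis is exactly what turns the endpoint equations into the interior recurrence. Once that basis is verified, the remaining algebra is routine.
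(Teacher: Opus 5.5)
Your proof is correct and follows the same route as the paper. You pass from the cofactor form of Matrix--Tree to the spectral form $\tau(G)=\frac{1}{|V|}\prod_{\ell\ge 2}\lambda_\ell$, then use that the Laplacian spectrum of $G_{m,n}$ consists of all pairwise sums of the two path spectra. The paper only asserts these facts, whereas you justify them via the coefficient of $x$ in $\det(xI-L)$, the Kronecker-sum decomposition, and the shifted-cosine eigenvectors of $L(P_m)$, so yours is a fuller version of the same argument.
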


\begin{proof}
By Matrix--Tree, $\tau(G)=\det(L^{(i)})$. Equivalently, using Laplacian eigenvalues,
\[
\tau(G)\;=\;\frac{1}{|V|}\;\prod_{\ell=2}^{|V|}\lambda_\ell\,,
\]
where $\lambda_1=0$ and $\lambda_2,\dots,\lambda_{|V|}$ are the nonzero Laplacian eigenvalues. For the Cartesian product $P_m\square P_n$, the Laplacian spectrum is the pairwise sum of the spectra of the factors, yielding the stated~$\lambda_{j,k}$. Substituting $|V|=mn$ gives the formula.
\end{proof}

\subsection[Exponential Growth]{Exponential Growth: Why ``Every Node Is the Root'' Scales Resilience}

The product form above immediately implies the scaling law:

\begin{corollary}[Exponentially many alternate trees]\label{cor:exp-trees}
Let $N=mn=|V(G_{m,n})|$. Then $\log \tau(G_{m,n}) = \Theta(N)$, i.e.\ $\tau(G_{m,n})$ grows exponentially in the number of nodes. Consequently the number of rooted spanning trees grows as $N\cdot \tau(G_{m,n})$, also exponential in~$N$.
\end{corollary}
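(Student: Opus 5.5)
The plan is to prove matching linear upper and lower bounds on $\log\tau(G_{m,n})$, each derived directly rather than through asymptotics of the eigenvalue product. (Note that $m,n\ge2$ is needed, since $G_{1,n}$ is a path with $\tau=1$; I will assume $m,n\ge2$ as in the Proposition.)

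\textbf{Upper bound.} I will bound the product formula of the Proposition term by term. Every factor satisfies
\[
4-2\cos(\pi j/m)-2\cos(\pi k/n)\le 8,
\]
so
\[
\tau(G_{m,n})\le \tfrac{1}{N}\,8^{N-1}\le 8^{N}.
\]
A purely combinatorial route gives the same conclusion. A spanning tree is a set of $N-1$ edges chosen from $|E|\le 2N$ edges, so $\tau\le\binom{2N}{N-1}\le 4^N$. Either way, $\log\tau\le cN$.

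\textbf{Lower bound.} Here I would avoid the product formula, because some eigenvalues are small (near $\pi^2/m^2$), which makes a termwise lower bound awkward. Instead I will construct exponentially many spanning trees explicitly. Tile the grid with disjoint $2\times2$ blocks; there are at least $\lfloor m/2\rfloor\lfloor n/2\rfloor\ge N/9$ of them for $m,n\ge2$.

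Start from a fixed spanning tree $T_0$ built as follows. In each block, use a 3-edge path that omits one designated edge $e_b$ of the 4-cycle. Then connect these block paths, together with any leftover vertices, by a fixed set of edges that never use a block's internal 4-cycle edges. This is possible by contracting each block to a single vertex: the contracted graph is connected, so take any spanning tree of it.

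Now, independently in each block, we may choose which of the 4 cycle edges to omit. Each choice still yields a spanning tree, because each block remains internally connected and acyclic, and the inter-block structure is unchanged. This gives
\[
\tau\ge 4^{N/9},\qquad\text{so}\qquad \log\tau\ge (\log 4/9)\,N.
\]

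\textbf{Main obstacle.} The step needing the most care is the lower-bound construction. I must make sure that after contracting blocks, the quotient graph is connected and its spanning tree lifts to actual grid edges. I also must make sure that no choice of omitted edges creates a cycle across blocks. The second point holds because the inter-block edges form a tree on the contracted vertices, and each block is itself a tree.

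\textbf{Rooted trees.} Finally, the rooted-tree count is $N\cdot\tau$ by the counting remark in the Definitions subsection. Since $\log(N\tau)=\log N+\Theta(N)=\Theta(N)$, this count is also exponential in $N$, completing the corollary.
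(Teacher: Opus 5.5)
Your proof is correct, and it takes a genuinely different route from the paper.

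The paper stays entirely spectral. It writes $\log\tau(G_{m,n})=\sum_{(j,k)\neq(0,0)}\log\lambda_{j,k}-\log(mn)$, reads the sum as a Riemann sum for $\int_0^1\!\int_0^1\log(4-2\cos\pi x-2\cos\pi y)\,dx\,dy$, and concludes $\log\tau=c\,mn+o(mn)$ for some $c>0$. Made rigorous, that route gives the sharp constant ($c\approx 1.166$, the spanning-tree entropy of the square lattice). However, it has two costs:
\begin{itemize}
\item It must control the logarithmic singularity of the integrand at the origin, which comes from exactly the small eigenvalues of order $\pi^2/m^2$ that you chose to sidestep.
\item It only covers the regime where both $m$ and $n$ grow. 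For thin grids such as $m=2$ the growth rate per vertex is $\tfrac{1}{2}\log(2+\sqrt3)\approx 0.658$, so the single-constant statement fails there, even though $\Theta(N)$ still holds.
\end{itemize}

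You instead prove two elementary, non-asymptotic bounds, $\tfrac{\log 4}{9}\,N\le\log\tau(G_{m,n})\le N\log 8$, valid uniformly for all $m,n\ge 2$. The upper bound comes from a termwise bound on the product, or from the count $\binom{|E|}{N-1}$. The lower bound comes from the block-contraction construction, which you justify correctly: a spanning tree of each part together with a lifted spanning tree of the quotient is always a spanning tree of the whole graph, and distinct omitted block edges give distinct edge sets.

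You give up the exact constant but gain uniformity and portability. Your lower bound and the edge-counting form of your upper bound never use the eigenvalue formula for the four-neighbour grid. So the same argument applies to any connected bounded-degree graph with linearly many vertex-disjoint cycles, including the octavalent mesh the paper actually has in mind. Your remark that $m,n\ge 2$ is needed, since $\tau(G_{1,n})=1$, is also a correct sharpening of the statement.
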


\begin{proof}
Using
\[
\log \tau(G_{m,n}) = \sum_{\substack{(j,k)\neq(0,0)}} \log \lambda_{j,k} - \log(mn),
\]
and interpreting the double sum as a Riemann sum over the unit square, one obtains
\[
\log \tau(G_{m,n}) = c\, mn + o(mn)
\]
for a constant $c>0$. Hence $\tau(G_{m,n}) = \exp(\Theta(N))$ with $N=mn$.
\end{proof}

\marginnote[1cm]{\footnotesize A concrete instance: a degree-limited grid-like graph with modest size already yields $\approx 4.64\times 10^{45}$ alternative spanning trees.}

This is the combinatorial backbone of the operational claim in the OAE mesh: as the mesh grows, the space of alternate spanning trees becomes astronomically large, and local healing can select a different parent edge toward the root without consulting global routing state. The contrast with chains is stark: in a chain, reliability \emph{decreases} exponentially with length (each additional link multiplies the failure probability), whereas in a sufficiently connected mesh, the number of alternative routing structures \emph{increases} super-polynomially, allowing failure absorption without global partition.

\subsection[Edge-Disjoint Trees]{From ``Many Trees'' to Fault Survivability: Edge-Disjoint Spanning Trees}

Counting trees provides intuition, but survivability under faults is more directly captured by \emph{edge-disjoint spanning trees}---and the number of such trees is governed by the graph's \emph{edge-connectivity}~$\lambda(G)$. By Menger's theorem, $\lambda(G)$ equals the maximum number of edge-disjoint paths between any pair of vertices; by the Nash--Williams / Tutte theorem, the maximum number of edge-disjoint spanning trees is $\lfloor\lambda(G)/2\rfloor$. Thus the topological richness measured by Kirchhoff's theorem translates directly into quantitative fault tolerance.

\begin{definition}
Spanning trees $T_1,\dots,T_k$ are \emph{edge-disjoint} if $E(T_i)\cap E(T_j)=\emptyset$ for $i\neq j$. Let $\kappa_T(G)$ be the maximum number of edge-disjoint spanning trees in~$G$.
\end{definition}

\begin{proposition}[Failure avoidance via disjoint trees]
If $\kappa_T(G)\ge k$, then for any set $F\subseteq E$ of failed edges with $|F|<k$, there exists at least one spanning tree~$T$ of~$G$ such that $E(T)\cap F=\emptyset$.
\end{proposition}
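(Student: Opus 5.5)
The plan is a pigeonhole argument on edge-disjoint trees; no earlier counting result is needed. Since $\kappa_T(G)\ge k$, the definition gives spanning trees $T_1,\dots,T_k$ of $G$ with $E(T_i)\cap E(T_j)=\emptyset$ for $i\neq j$. I will show that a set $F$ of fewer than $k$ edges cannot meet all $k$ of them.

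The key step is to charge each failed edge to the trees containing it. Because the edge sets are pairwise disjoint, each edge $e\in F$ lies in $E(T_i)$ for at most one index $i$. Define the set of hit indices
\[
H=\{\, i : E(T_i)\cap F\neq\emptyset \,\}.
\]
The map sending each hit index $i$ to some chosen edge of $E(T_i)\cap F$ is injective, again by disjointness. Therefore $|H|\le |F|<k$.

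It follows that some index $i\in\{1,\dots,k\}\setminus H$ exists. The corresponding tree $T=T_i$ is a spanning tree of $G$ with $E(T)\cap F=\emptyset$, as required. A useful remark to add is that $T$ is also a spanning tree of $G-F$, so $G-F$ remains connected; this is the operational point that the root stays reachable from every node.

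There is no substantive obstacle. The only care needed is the injectivity step, which is exactly where edge-disjointness is used rather than mere distinctness of the trees. I will also note the degenerate case $k=0$ (or $k\le 0$): then $|F|<k$ is vacuous, so no extra hypothesis such as connectivity of $G$ has to be invoked separately. For $k\ge1$, connectivity is automatic because $G$ has a spanning tree.
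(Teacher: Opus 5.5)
Your proposal is correct and follows the paper's proof: take $k$ edge-disjoint spanning trees, observe that each failed edge lies in at most one of them, and conclude by pigeonhole that some $T_i$ avoids $F$ and so remains a spanning tree of $G\setminus F$. Your injective map from hit indices to failed edges just makes the paper's one-line counting step explicit.
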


\begin{proof}
Let $T_1,\dots,T_k$ be edge-disjoint spanning trees. Each failed edge in~$F$ can intersect at most one of the~$T_i$ (by disjointness). Thus $|F|<k$ implies at least one~$T_i$ is untouched by~$F$. That tree remains a valid spanning tree in $G\setminus F$.
\end{proof}

By the Nash--Williams / Tutte theorem, $\kappa_T(G) = \min_{\text{partitions } \mathcal{P}} \lfloor |E(\mathcal{P})| / (|\mathcal{P}|-1) \rfloor$, tying the maximum number of edge-disjoint spanning trees directly to global edge-connectivity. In a degree-bounded mesh with edge-connectivity~$\ge d$, we have $\kappa_T(G) \ge \lfloor d/2 \rfloor$. Thus even multiple simultaneous link failures do not disconnect the fabric, and local parent reselection can restore a spanning arborescence without invoking global reconvergence.

\marginnote[-2.5cm]{\footnotesize The Nash--Williams / Tutte theorem (1961/1961) is the classical characterization of edge-disjoint spanning trees. It grounds the resilience argument in formal graph theory rather than intuition.}

Interpreted operationally: if the physical fabric admits even a modest number of disjoint spanning trees, then multiple independent link failures can occur while at least one full-tree routing substrate remains intact. OAE's claim is stronger in practice: because repair is performed at child cells using only local information, and because each node is a legitimate root, the system can continually re-select among available trees at nanosecond timescales rather than waiting for millisecond reconvergence.

\subsection[Resilience Amplification]{Resilience Amplification: From Counting to Probability}

The preceding combinatorial results become operationally decisive when we ask: \emph{what is the probability that the fabric disconnects under random edge failures?}

\begin{definition}[Edge-connectivity]
The \emph{edge-connectivity} $\lambda(G)$ of a connected graph~$G$ is the minimum number of edges whose removal disconnects~$G$. By Menger's theorem, $\lambda(G)$ equals the maximum number of edge-disjoint paths between any pair of vertices.
\end{definition}

\marginnote{\footnotesize For a tree, $\lambda=1$: every edge is a bridge. For an octavalent mesh, $\lambda \ge 4$ (and often higher), meaning at least four edge-disjoint paths connect any two nodes.}

For a tree, $\lambda(G)=1$---every edge is a bridge whose failure disconnects the graph. For a well-connected mesh with minimum degree~$d$, $\lambda(G) \ge d$ by Whitney's theorem, and the Nash--Williams / Tutte bound gives $\kappa_T(G) \ge \lfloor d/2 \rfloor$ edge-disjoint spanning trees. This yields a probabilistic resilience amplification:

\begin{theorem}[Resilience amplification in meshes]\label{thm:resilience}
Let $G$ contain $k = \kappa_T(G)$ edge-disjoint spanning trees. If each edge fails independently with probability at most~$p$, then
\[
P(G\;\text{disconnected}) \;\le\; \binom{|E|}{k}\, p^k \;\le\; |E|^k\, p^k.
\]
Network reliability improves exponentially with edge-connectivity: disconnection probability scales as~$O(p^k)$.
\end{theorem}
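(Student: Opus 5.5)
Disconnection of $G$ under random edge failures forces many failures: by the preceding proposition on failure avoidance via disjoint trees, if the failed set $F$ satisfies $|F|<k$, then some $T_i$ among the $k=\kappa_T(G)$ edge-disjoint spanning trees is untouched, and $G\setminus F$ stays connected. So the event $\{G\setminus F \text{ disconnected}\}$ is contained in the event $\{|F|\ge k\}$. Note the combinatorial step uses only disjointness, not any regularity of the mesh.

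\textbf{Union bound.} If $|F|\ge k$, then $F$ contains some $k$-subset $S\subseteq E$ all of whose edges failed. Hence
\[
P(G\ \text{disconnected}) \;\le\; P\bigl(\exists S\subseteq E,\ |S|=k,\ S\subseteq F\bigr) \;\le\; \sum_{|S|=k} P(S\subseteq F).
\]
By independence, $P(S\subseteq F)=\prod_{e\in S}P(e\text{ fails})\le p^k$. There are $\binom{|E|}{k}$ such sets, which gives the first inequality. The second inequality follows from $\binom{|E|}{k}\le |E|^k/k!\le |E|^k$.

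\textbf{Asymptotic reading.} For fixed $G$, the bound is $C_G\,p^k$, which is $O(p^k)$ as $p\to 0$. By the Nash--Williams / Tutte bound quoted above, $k\ge\lfloor\lambda(G)/2\rfloor$, so the exponent grows with edge-connectivity.

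\textbf{Main obstacle: the closing sentence, not the inequalities.} The inequalities are routine; the risk lies in interpreting the informal closing claim. The sharp exponent is actually $\lambda(G)$: disconnection requires all edges of some cut to fail, and every cut has at least $\lambda(G)$ edges. Therefore a union bound over the minimal cuts gives an even better $O(p^{\lambda})$. The theorem as stated, with the weaker exponent $k$, is still correct, and I would remark on this strengthening rather than rely on it.
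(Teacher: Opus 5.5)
Your proposal is correct and follows the paper's own argument: disconnection forces each of the $k$ edge-disjoint spanning trees to lose an edge, so at least $k$ distinct edges must fail, and a union bound over the $\binom{|E|}{k}$ possible $k$-subsets, each failing with probability at most $p^k$, gives the result. Your closing remark is also sound and goes beyond the paper: any disconnecting failure set contains an entire edge cut of size at least $\lambda(G)\ge\kappa_T(G)$, so the exponent sharpens to $\lambda(G)$. This is in fact what the paper's $3\times 3$ grid example needs, since $G_{3,3}$ has only $12<16$ edges, hence $\kappa_T(G_{3,3})=1$ while $\lambda(G_{3,3})=2$.
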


\begin{proof}
Let $T_1,\dots,T_k$ be edge-disjoint spanning trees. $G$ remains connected if any~$T_i$ survives intact. $G$ disconnects only if every~$T_i$ loses at least one edge. Since the trees are edge-disjoint, each failed edge disables at most one tree. Therefore at least~$k$ distinct edges must fail. The probability that any specific set of~$k$ edges all fail is at most~$p^k$, and there are at most $\binom{|E|}{k}$ such sets.
\end{proof}

\paragraph{Concrete example: chain vs.\ grid at $p=10^{-6}$.}

\marginnote[-2cm]{\footnotesize The 18-order-of-magnitude gap between chain and grid disconnection probabilities illustrates why the topology of the substrate---not just the protocol---determines whether soft partitions reach the application layer.}

Consider a 10-node chain (a path graph $P_{10}$) and a $3\times 3$ grid ($G_{3,3}$, 9~nodes, 12~edges). Both are small fabrics; the reliability difference is already dramatic.

The chain has $\lambda(P_{10})=1$ and $\kappa_T=1$. A single edge failure disconnects it:
\[
P_{\text{disc}}^{\text{chain}} = 1-(1-p)^9 \approx 9p = 9 \times 10^{-6}.
\]
The $3\times 3$ grid has $\lambda(G_{3,3})=2$ and $\kappa_T(G_{3,3}) \ge 2$ (two edge-disjoint spanning trees exist by Nash--Williams / Tutte, since $\lfloor |E(\mathcal{P})|/(|\mathcal{P}|-1) \rfloor \ge 2$ for the grid). Disconnection requires at least 2 simultaneous failures in a minimum cut:
\[
P_{\text{disc}}^{\text{grid}} \le \binom{12}{2}\, p^2 = 66 \times 10^{-12} \approx 7 \times 10^{-11}.
\]
The grid is five orders of magnitude more reliable than the chain---and this is a \emph{tiny} grid. For the $10\times 20$ octavalent mesh of Figure~\ref{fig:cellular-fabrix} with $\kappa_T \ge 4$, the bound becomes $O(p^4)$, yielding disconnection probabilities below~$10^{-24}$. At realistic link failure rates, this is the difference between a fabric that partitions and one that does not.

\newpage
\subsection[Chain Reliability]{Chain Reliability, Flapping Links, and Local Redundancy}

We now make precise the intuitive claim illustrated in Figures~\ref{fig:chain-linear} and~\ref{fig:chain-mesh}: a single pathological link can dominate end-to-end reliability in a chain, even if the average failure probability across the fabric changes only marginally.

\begin{figure}[ht]
\centering
\includegraphics[width=\linewidth]{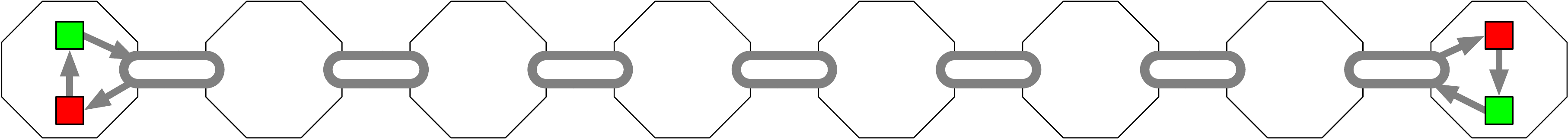}
\caption{Linear chain of OAE cells connected by bilateral AELinks. Each link succeeds independently with probability~$s_i = 1{-}p_i$. End-to-end connectivity requires \emph{every} link to survive: $P_{\text{chain}}=\prod s_i$, which decays exponentially in the number of hops. A single flapping link partitions the entire chain. Green and red indicators show committed and aborted reconciliation outcomes.}
\label{fig:chain-linear}
\end{figure}

\vspace{1.2cm}

\begin{figure}[ht]
\centering
\includegraphics[width=\linewidth]{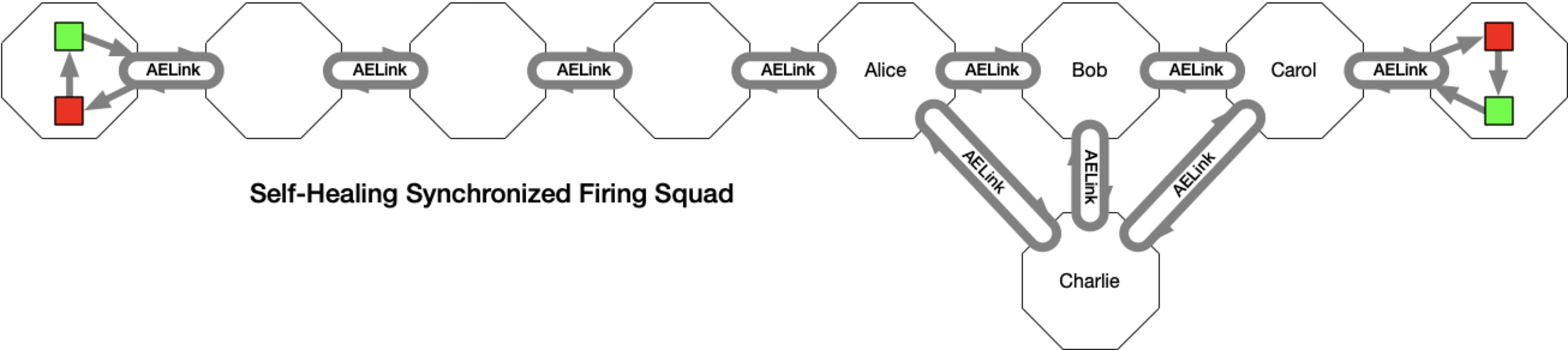}
\caption{The same chain, but with a triangle formed by Alice, Bob, and Carol through Charlie. The triangle provides a two-hop alternate path: if the direct Alice--Bob link degrades, traffic routes through Charlie. Local redundancy gives $P_\triangle = s_d + s_a - s_d\,s_a$, restoring high composite reliability even when the direct edge is degraded. Failure does not propagate globally if redundancy exists at constant graph distance.}
\label{fig:chain-mesh}
\end{figure}

\paragraph{Independent link model.}

Let a path consist of $L$ links. For link~$i$, let $p_i$ denote the probability that a reconciliation slot fails, and $s_i = 1 - p_i$ its success probability. Assuming independence per slot, the probability of successful end-to-end delivery in one attempt is
\[
P_{\text{chain}} = \prod_{i=1}^{L} s_i.
\]
If all links are healthy with identical failure probability~$p$,
\[
P_{\text{healthy}} = (1-p)^L \approx e^{-pL}
\]
for small~$p$. Even with extremely small per-link failure rates, long chains accumulate multiplicatively.

\paragraph{Single bad link amplification.}

Suppose one link becomes pathological with failure probability $q \gg p$, while the remaining $L{-}1$ links remain healthy. Then
\[
P_{\text{bad}} = (1-p)^{L-1}(1-q).
\]
The ratio relative to the healthy case is
\[
\frac{P_{\text{bad}}}{P_{\text{healthy}}}
=
\frac{1-q}{1-p}
\approx 1-q,
\]
so path reliability is dominated by its worst link. The chain is only as strong as its weakest element, independent of how many healthy links surround it.

\marginnote{\footnotesize End-to-end reliability is governed by multiplicative composition, not by average failure rate. A single bad link at $q=0.1$ collapses the chain to ${\approx}\,90\%$ reliability regardless of how many healthy links surround it.}

Yet the average failure probability across the chain is
\[
\bar{p}
=
\frac{1}{L}\sum_{i=1}^{L} p_i
=
p + \frac{q-p}{L}.
\]
For large~$L$, $\bar{p}$ changes only slightly even though the end-to-end success probability may collapse dramatically. This illustrates a general principle: end-to-end reliability is governed by multiplicative composition, not by average failure rate.

\paragraph{Flapping link model.}

\marginnote{\centering\includegraphics[height=1.2cm]{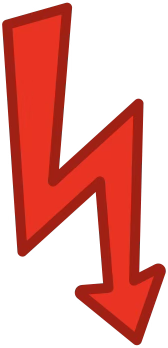}\\[2pt]\scriptsize A flapping link: the red broken arrow denotes a link oscillating between good and bad states---non-consensual link training, unstable SerDes lock, or intermittent physical faults.}

To model link flapping or non-consensual link training, consider a two-state Markov process with states $G$ (good) and $B$ (bad). Let $p_G$ and $p_B$ be the per-slot failure probabilities in each state, and let transition probabilities per slot be
\[
G \to B \;\text{with probability}\; a,
\qquad
B \to G \;\text{with probability}\; b.
\]
The stationary probability of being in the bad state is
\[
\pi_B = \frac{a}{a+b}.
\]
The effective per-slot failure probability of the link is then
\[
p_{\text{eff}} = \pi_B\, p_B + (1-\pi_B)\,p_G.
\]
Even if $p_G$ is negligible, a sufficiently large $p_B$ or transition rate~$a$ can increase $p_{\text{eff}}$ by orders of magnitude. Such behavior is frequently observed in practice as link flapping or unstable training cycles.

\paragraph{Local redundancy via triangles.}

Now consider a triangle providing two alternate routes between nodes. Let $s_d$ denote the success probability of the direct edge, and let $s_a = s_1 s_2$ denote the success probability of the two-hop alternate path. Assuming independence,
\[
P_{\triangle}
=
1 - (1-s_d)(1-s_a)
=
s_d + s_a - s_d\, s_a.
\]
Thus even if the direct edge is degraded, a healthy alternate path restores high composite reliability. This is the mathematical meaning of local healing: failure does not propagate globally if redundancy exists at constant graph distance.

\paragraph{Timeout amplification in conventional stacks.}

In conventional asynchronous stacks, failure detection is timeout-based. Let $P_{\text{chain}}$ be per-attempt success probability. After $R$~retries,
\[
P_{\text{eventual}} = 1 - (1-P_{\text{chain}})^R.
\]
However, retries introduce exponential backoff, queue buildup, and correlated congestion, producing heavy-tailed latency distributions. A single flapping link can therefore trigger network-wide timeout storms. OAE alters this dynamic fundamentally. Because reconciliation outcomes are bilateral and bounded-time, link failure is detected deterministically within~$\Delta$, and ambiguity does not propagate upward as inferred silence. Local repair in a mesh confines the effect of degraded links, preventing multiplicative amplification across long chains.

\subsection[What This Buys Against CAP]{What This Buys You Against CAP}

CAP's partition condition is about the \emph{observation} of partitions by the distributed system. In a Clos fabric, faults often manifest as control-plane-visible events whose reconvergence latency is long enough to be observed as partitions by higher layers. In a mesh with exponential alternate trees and rapid local healing, the practical regime changes: many physical faults are corrected before they rise to the level of an application-visible partition boundary. This does not ``refute'' CAP; it shrinks the effective partition surface area by moving fault detection and repair below the observation threshold.

\FloatBarrier
\section[How OAE Changes the Tradeoff Surface]{How OAE Changes the CAP Tradeoff Surface}

We can now restate the claim in CAP- and CAL-compatible terms.

\subsection[CAP-Compatible Framing]{CAP-Compatible Framing}

Under hard partition, a replicated service must still choose between making progress (availability) and preserving a strong notion of consistency. CAP remains correct\sidecite{GilbertLynch02}{Gilbert \& Lynch, \emph{SIGACT News}, 2002}.

OAE's claim is different:

\begin{quote}
OAE reduces the set of faults that become application-visible partitions by eliminating ambiguous in-flight link states and by healing dominant fabric faults locally in bounded time.
\end{quote}

Operationally this means that many events that would force a CAP choice in conventional stacks do not reach the layer at which the choice is made.

\subsection[CAL-Compatible Framing]{CAL-Compatible Framing}

In CAL terms\sidecite{LeeCAL21}{Lee et~al., \emph{arXiv:2109.07771}, 2021}, OAE attempts to keep \textbf{apparent latency}~$L$ bounded at the layer where semantic ambiguity is created. A subtle but important distinction: CAL quantifies the tradeoff \emph{within} the asynchronous uncertainty model, where~$L$ includes network delay, execution overhead, and clock synchronization error. OAE changes the uncertainty model itself. The mechanism is not a better timeout heuristic or a tighter latency bound within the same model; it is a different substrate semantics in which:

\begin{itemize}[leftmargin=*, itemsep=3pt]
\item Failure is detected as a deterministic slot outcome rather than inferred from delayed messages.
\item Repair occurs locally in the fabric without global control-plane reconvergence.
\end{itemize}

Thus, for the dominant class of soft partitions, $L$~is prevented from becoming unbounded---not by optimizing within the asynchronous model, but by replacing the source of unbounded uncertainty at the substrate.

\subsection[Dominant Failure Mode Reduction]{Dominant Failure Mode Reduction: $H \cup S \cup U \to H$}

To state OAE's contribution precisely, we classify the faults that produce application-visible partitions into three categories:

\begin{description}[leftmargin=*, itemsep=3pt]
\item[$H$ (Hard):] Physical graph cuts---cable severs, power-domain isolation, complete path loss. These are irreducible: no protocol can communicate across a severed link.
\item[$S$ (Soft):] Transient reachability loss caused by local faults, congestion-induced drops, or control-plane reconvergence. These present as partitions to higher layers because detection is timeout-based and repair is slow.
\item[$U$ (Uncertainty):] Ambiguous link states where the sender cannot determine whether a message was delivered, lost, or is still in flight. Higher layers infer partition from prolonged uncertainty.
\end{description}

In conventional stacks, the observable partition set is $H \cup S \cup U$, and in practice $S$ and $U$ dominate: most ``partitions'' that applications experience are not hard graph cuts but soft faults amplified by semantic ambiguity and slow recovery.

OAE's bisynchronous semantics eliminate category~$U$ entirely (every slot resolves to a definitive outcome), and the octavalent mesh with local bounded-time healing converts category~$S$ faults into sub-threshold events that are repaired before any higher layer observes them. The residual partition set visible to applications is reduced to~$H$---the irreducible hard partitions that no protocol can mask. This is the sense in which OAE ``circumvents'' CAP: not by falsifying the theorem, but by engineering the substrate so that the theorem's partition condition is satisfied only by genuinely catastrophic events.

\subsection[PACELC-Compatible Framing]{PACELC-Compatible Framing}

Abadi's PACELC extension\sidecite{AbadiPACELC12}{Abadi, \emph{Computer}, 2012} observes that even without partitions, systems face a consistency-vs-latency tradeoff. OAE targets both branches:

\begin{itemize}[leftmargin=*, itemsep=3pt]
\item \textbf{Under partition (PAC):} OAE reduces the set of events that rise to the level of application-visible partition.
\item \textbf{Without partition (ELC):} OAE reduces tail latency by eliminating retries, timeouts, and the retry storms that conventional stacks induce when ambiguous silence forces timeout-based failure detection.
\end{itemize}

\begin{table}[ht]
\centering
\small
\setlength{\tabcolsep}{4pt}
\begin{tabular}{@{}p{2cm}p{2.8cm}p{4.8cm}@{}}
\toprule
\textbf{Framing} & \textbf{Key idea} & \textbf{What OAE changes} \\
\midrule
CAP \citep{Brewer00} &
Under partition, trade off C vs.\ A &
Reduces application-visible soft partitions by faster local detection/repair. \\
GL02 \citep{GilbertLynch02} &
Async model; partitions imply tradeoff &
Bounded-time bilateral outcomes reduce ambiguity that forces timeout-based partition inference. \\
CAL \citep{LeeCAL21} &
Replace P with apparent latency~$L$ &
Engineers bounded~$L$ for dominant faults, pushing ``partition'' below observation threshold. \\
PACELC \citep{AbadiPACELC12} &
Even without P, trade off C vs.\ latency &
Targets both branches: reduce tail latency \emph{and} reduce partition-visible events. \\
\bottomrule
\end{tabular}
\caption{How OAE relates to CAP, CAL, and PACELC. OAE does not falsify impossibility results; it changes where partition-like behavior becomes observable.}
\label{tab:capcal}
\end{table}

\FloatBarrier
\section[Models Are Not Laws]{The Deeper Lesson: Models Are Not Laws}

The widespread misinterpretation of CAP as a physical law rather than a model-dependent theorem reflects a broader pattern in distributed systems theory. Results like CAP, the FLP impossibility\sidecite{FLP85}{Fischer et~al., \emph{JACM}, 1985}, and the Two Generals Problem\sidecite{Gray78}{Gray, \emph{Springer}, 1978} are routinely invoked as if they were laws of nature constraining all possible systems, when in fact they are theorems about specific abstract models~\citep{BorrillDC26}.

\marginnote[-3.5mm]{\footnotesize Halpern and Moses~\citep{HalpernMoses90} showed that common knowledge is provably unattainable in asynchronous systems but achievable in synchronous ones---another instance where the model, not nature, determines the boundary.}

The pattern is always the same: an impossibility result is proven in a weak model, the community internalizes it as a universal constraint, and decades of research are devoted to working around it within the model rather than questioning whether the model is appropriate. The possibility that the model itself is a category mistake---that it captures the wrong abstraction of the underlying physics---is rarely entertained.

Physicists learned this lesson long ago. The ultraviolet catastrophe was not solved by better calculations within classical thermodynamics; it required quantum mechanics. The Michelson--Morley experiment did not prove that the speed of light is absolute by working within Newtonian mechanics; it prompted the replacement of the Newtonian model with special relativity. In each case, the ``impossibility'' was a signal that the model was wrong, not that the phenomenon was forbidden. The same category mistake recurs in cislunar networking, where the assumption of a shared global clock leads to impossibility results that dissolve once the synchronization model is corrected~\citep{BorrillCislunar26}.

OAE applies this principle to networking. The CAP theorem is not a barrier to be overcome but a signal that the asynchronous model with unilateral fire-and-forget semantics is the wrong abstraction for point-to-point physical links. Replace the substrate, and the operational regime in which the impossibility result constrains practice shrinks dramatically---not because the theorem was incorrect, but because its assumptions apply to a narrower set of observable events.

\FloatBarrier
\section[Conclusion]{Conclusion}

CAP is not defeated; CAP is contextual. The standard CAP story assumes an asynchronous substrate with ambiguous silence and timeout-based failure detection. OAE changes that substrate: bisynchronous link reconciliation eliminates ambiguous in-flight states, and an octavalent mesh fabric performs local bounded repair without global control-plane reconvergence. Hard partitions remain possible, but the dominant class of soft partitions that arise from local faults and congestion are detected and healed below the observation threshold of higher layers, drastically reducing the frequency and duration of application-visible partitions.

In CAL terms, OAE engineers bounded apparent latency for the dominant fault class. In PACELC terms, it targets both branches of the tradeoff. The broader methodological claim mirrors the lesson of other impossibility results in distributed computing: theorems constrain models, not reality. When system practice is dominated by artifacts of a weak substrate model, a productive response is to change the substrate.


\makeatletter
\long\def\@latex@error#1#2{}%
\makeatother

\end{document}